\title{A Linear-time Algorithm for Orthogonal Watchman Route Problem with Minimum Bends}
\author{Hamid Hoorfar \thanks{Department of Computer Engineering and Information Technology, Amirkabir University of Technology (Tehran Polytechnic), {\tt \{hoorfar,ar.bagheri\}@aut.ac.ir}}\and Alireza Bagheri\footnotemark[1]~\footnote{\textit{Coresponding author}.}
 }
\begin{document}
\maketitle

\begin{abstract}
	Given an orthogonal polygon $ P $ with $ n $ vertices, the goal of the watchman route problem is finding a path $ S $ of the minimum length in $ P $ such that every point of the polygon $ P $ is visible from at least one of the point of $ S $. In the other words, in the watchman route problem we must compute a shortest watchman route inside a simple polygon of $ n $ vertices such that all the points interior to the polygon and on its boundary are visible to at least one point on the route. If route and polygon be orthogonal, it is called orthogonal watchman route problem. One of the targets of this problem is finding the orthogonal path with the minimum number of bends as possible. We present a linear-time algorithm for the orthogonal watchman route problem, in which the given polygon is monotone. Our algorithm can be used also for the problem on simple orthogonal polygons $ P $ for which the dual graph induced by the vertical decomposition of $ P $ is a path, which is called path polygon. 
\end{abstract}

\section{Introduction}
The Watchman Route Problem is an optimization problem in computational geometry where the objective is to compute the shortest route that a watchman should take, to guard an entire area, given only the layout of the area. The layout of the area is represented usually by a simple polygon and the objective is to find a shortest closed curve such that all the points within the polygon and on its boundary are visible to at least one point of the curve~\cite{sagheer2017watchman}. The problem is defined in two different variants. In the fixed variant, the watchman should passes through a given boundary point. An analogy of it can be a situation in which the watchman or robot has to enter a building through a door (starting point) and guard the interior of it. In float variant, there is not any starting point for watchman (robot). Chin and Ntafos were the first worked out an algorithm that finds the shortest fixed watchman route~\cite{chin1991shortest} of $ O(n^{4}) $ time complexity. Later, Xuehou Tan proposed a solution of $ O(n^{3}) $ time
complexity to this version of the problem that constructed the shortest route incrementally~\cite{tan1993incremental}.The float variant of problem without any stating point, that constructed the shortest watchman route in worst case $ O(n^{6}) $ time~\cite{carlsson1999finding} and also, he himself had already submitted another solution to the problem for the fixed case using a divide and conquer approach. Later, Tan contributed to the floating case of the problem too and gave a solution of $ O(n^{5}) $ time complexity~\cite{sagheer2017watchman}, thus improving the previous time bound of $ O(n^{6}) $. Tan has given a linear time 2-approximation algorithm for the fixed watchman route~\cite{tan2007linear}. In many applied examples of the problem, we use a robot as the watchman who guards the entire area and in the many cases the robots have several motion restrictions because of their structures. For example, inexpensive robots have been built that only have the ability to move in two directions perpendicular to each other. These robots are faster and cheaper than ordinary robots and their use is very cost effective. Therefore, in many cases the cheapest route is orthogonal, absolutely, with the minimum number of bends. More bends will cost more. In the orthogonal variant of the problem, the area is an orthogonal simple polygon and the shortest route must be also an orthogonal polygonal curve with the minimum number of bends. According to our knowledge, it is the first time this problem has been introduced with this condition. Our algorithm gives a linear time solution for floating watchman route problem where the polygon is monotone and orthogonal and the watchman only is allowed to walk on an orthogonal route. We extend this result to the orthogonal \textit{path } polygons~\cite{de2014guarding}. Path polygons are orthogonal polygons for which the dual graph induced by the vertical decomposition of $ P $ is a path. This is the exact linear-time algorithm on $ x $-monotone orthogonal polygons that is used \textit{vertical decomposition}. The vertical decomposition of an orthogonal polygon $ P $ is the decomposition of $ P $ into rectangles obtained by extending the vertical edges incident to every reflex vertex of $ P $ inward until it hits the boundary of $ P $. The dual graph of the vertical decomposition is the graph that has a node for each rectangle in the decomposition and there is an edge between two nodes if and only if their corresponding rectangles are adjacent~\cite{de2014guarding}. A \textit{fan} polygon $ F $ is a simple polygon that has a vertex $ \nu $ which the entire polygon is visible to it. The vertex $ \nu $ is named \textit{core} vertex. Every fan polygon like $ F $ has a kernel and core vertex $ \nu $ belongs to the kernel, so, if a guard is placed on the core vertex, the polygon is covered, completely. See figure~\ref{fi:fig2}(b).

\section{Preliminaries}
Assume we decompose a simple orthogonal and $ x $-monotone polygon $ P $ with $ n $ vertices into rectangles obtained by extending the vertical edges incident to the reflex vertices of $ P $. A reflex vertex has an interior angle $ \frac{3\pi}{2} $ while convex vertices have an interior angle of $ \frac{\pi}{2} $. It is clear that every orthogonal polygon with $ n $ vertices has $ \frac{n-4}{2} $ reflex vertices. So, after the decomposition of $ P $, $ \frac{n-2}{2} $ rectangles will be obtained, exactly. Let $R =\{R_{1},R_{2},\dots,R_{m}\} $, where $ m = \frac{n-2}{2} $, be the set of rectangles, ordered from left to right according to $ x $-coordinate of their left edges. In the other words, after the decomposition, we named rectangular parts $ R_{1},R_{2},\dots,R_{m} $ from left to right. It is shown in Figure~\ref{fi:fig1}. We name the upper and lower horizontal edges of $ R_{i} $ by $ u_{i} $ and $ l_{i} $, respectively. Assume that $  U =\{u_{1},u_{2},\dots,u_{m} \}  $ and $ L=\{l_{1},l_{2},\dots,l_{m} \} $, {\small $1\leq i \leq m $}.
\begin{figure}
	\centering
	\includegraphics[width=0.8\textwidth]{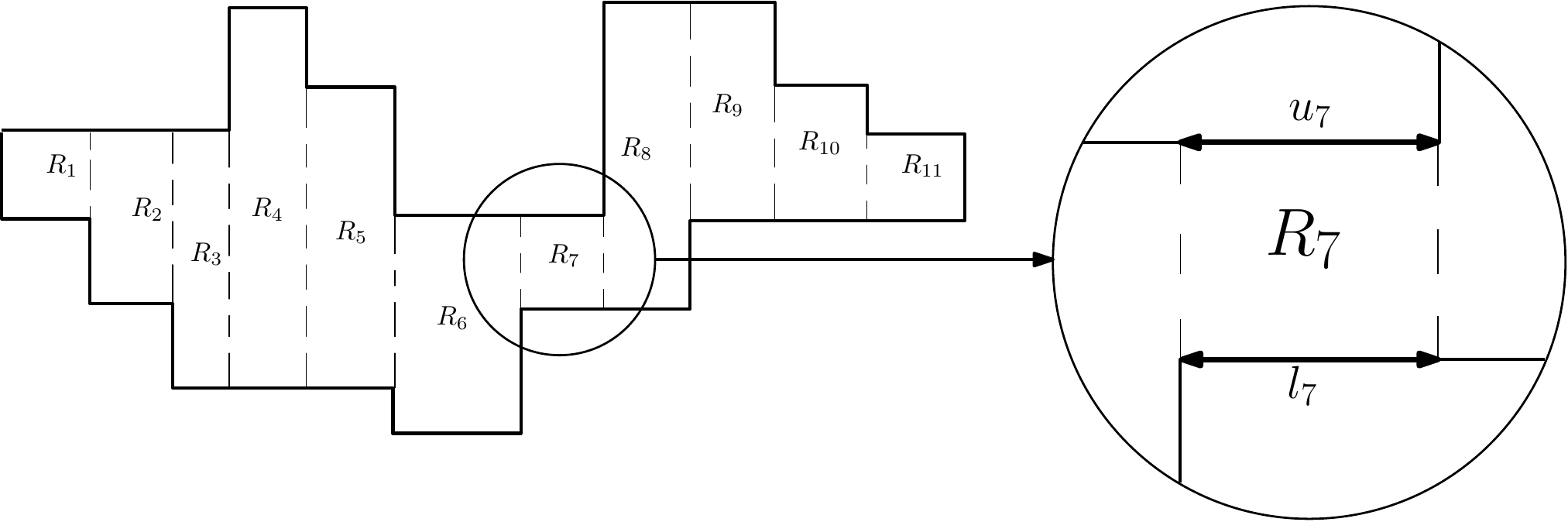}
	\caption{Illumination of the vertical decomposition and its notations.}
	\label{fi:fig1}
\end{figure}
For a horizontal line segment $ s $, we denote the $ x $-coordinate of the left vertex of $ s $ by $ x(s) $ that calls the $ x $-coordinate of line segment $ s $. Also, we denote the $ y $-coordinate of line segment $ s $ by $ y(s) $. It seems obvious that for all {\small $ 1 \leq i \leq m-1 $}, $ y(u_{i})=y(u_{i+1}) $ or $ y(l_{i})=y(l_{i+1}) $. So, we denote the edge of $ P $ that contains $ u_{i} $  by $ e(u_{i}) $ and the edge of $ P $ that contains $ l_{i} $  by $ e(l_{i}) $. Let $ E_{U}=\{e(u_{i})|1 \leq i \leq m\} $ and $ E_{L}=\{e(l_{i})|1 \leq i \leq m\} $, the sets of edges ordered from left to right. In a set $ E $ of horizontal edges of $ P $, $ e_{j} $ is named \textit{local maximum} if $ y(e_{j})\geq y(e_{j-1}) $ and $ y(e_{j})\geq y(e_{j+1}) $ and similarly, $ e_{k} $ is named \textit{local minimum} if $ y(e_{k})\leq y(e_{k-1}) $ and $ y(e_{k})\leq y(e_{k+1}) $. $ u_{m}$ (or $ l_{m}$) is named \textit{local maximum} if $ e( u_{m}) $ (or $ e(l_{m}) $) is local maximum, and $ u_{n}$ (or $ l_{n} $) is named \textit{local minimum} if  $ e( u_{n}) $ (or $ e(l_{n} )$) is local minimum. Therefore, in the set $ R $, $ R_{l} $ is named \textit{local maximum} if $ u_{l} $ and $ l_{l} $ are local maximum and local minimum, respectively. Two axis-parallel segments $ l $ and $ l' $ are defined as \textit{weak visible} if an axis-parallel line segment could be drawn from some point of $ l $ to some point of $ l' $ that does not intersect $ P $. If polygon $ P $ is $ x $-monotone and also $ y $-monotone, then $ P $ is named \textit{orthoconvex} polygon.
 \begin{lemma}
 \label{le:lemma001}
   For any orthoconvex polygon $ P $, if there exists a horizontal line segment $ \sigma_{1} $  which is connecting the leftmost and the rightmost vertical edges of $ P $ such that $ \sigma_{1}\in P $ and there exists a vertical line segment $ \sigma_{2} $  which is connecting the upper and the lower horizontal edges of $ P $ such that $ \sigma_{2}\in P $, then $ P $ has a kernel. If guard $ g $ occurs in the kernel, every point in $ P $ is guarded by it. See figure~\ref{fi:fig2}.
\end{lemma}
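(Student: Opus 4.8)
\noindent\emph{Proof sketch.} The plan is to produce one point that guards all of $P$, namely the crossing of $\sigma_{1}$ and $\sigma_{2}$, and then note that the collection of all such points is exactly the kernel, so the kernel is nonempty. Write $[x_{\min},x_{\max}]$ and $[y_{\min},y_{\max}]$ for the $x$- and $y$-extents of $P$. Since $P$ is orthoconvex, every vertical line $x=c$ meets $P$ in a single (possibly empty) segment and every horizontal line $y=c$ likewise meets $P$ in a single segment; in particular the leftmost/rightmost vertical edges and the upper/lower horizontal edges are well defined (each is the connected slice of $P$ by an extreme line). Put $y_{1}=y(\sigma_{1})$ and $x_{2}=x(\sigma_{2})$. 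Because $\sigma_{1}$ joins the leftmost to the rightmost vertical edge and $\sigma_{1}\subseteq P$, we get $(x,y_{1})\in P$ for every $x\in[x_{\min},x_{\max}]$; symmetrically $(x_{2},y)\in P$ for every $y\in[y_{\min},y_{\max}]$. Hence $x_{2}\in[x_{\min},x_{\max}]$ and $y_{1}\in[y_{\min},y_{\max}]$, so the point $p:=(x_{2},y_{1})$ lies on both $\sigma_{1}$ and $\sigma_{2}$, and in particular $p\in P$.

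Next I would show that $p$ sees every $q=(q_{x},q_{y})\in P$, i.e.\ that $pq\subseteq P$. Any point $w=(x_{w},y_{w})$ on $pq$ has $x_{w}$ lying between $x_{2}$ and $q_{x}$ and $y_{w}$ lying between $y_{1}$ and $q_{y}$; since both endpoints are in $P$, also $x_{w}\in[x_{\min},x_{\max}]$ and $y_{w}\in[y_{\min},y_{\max}]$. First, the horizontal line $y=q_{y}$ meets $P$ in a single segment that contains $x_{2}$ (because $(x_{2},q_{y})\in\sigma_{2}\subseteq P$) and contains $q_{x}$ (because $q\in P$); hence it contains $x_{w}$, so $(x_{w},q_{y})\in P$. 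Then the vertical line $x=x_{w}$ meets $P$ in a single segment that contains $y_{1}$ (because $(x_{w},y_{1})\in\sigma_{1}\subseteq P$) and contains $q_{y}$ (just shown); hence it contains $y_{w}$, so $w=(x_{w},y_{w})\in P$. Therefore $pq\subseteq P$, so $p$ guards all of $P$, the kernel is nonempty, and any guard $g$ in the kernel covers $P$ — which is the final assertion.

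The step I expect to be the crux is the inclusion $pq\subseteq P$: knowing only that $\sigma_{1}$ lies in $P$ one cannot directly compare $y_{w}$ with the upper boundary of $P$ at abscissa $x_{w}$, so the comparison must be routed through the auxiliary point $(x_{w},q_{y})$, and this is precisely where $\sigma_{2}$ is needed; the two applications of monotonicity (first horizontal at height $q_{y}$, then vertical at abscissa $x_{w}$) are what make an $L$-shaped detour "fill in" to the straight segment. No case analysis on the relative position of $q$ and $p$ is required, since each step only uses that $x_{w}$ (resp.\ $y_{w}$) lies between two values already known to belong to $P$ along the relevant line. The only remaining care is bookkeeping with the conventions above, namely that the two "single segment" properties are exactly $x$- and $y$-monotonicity of $P$.
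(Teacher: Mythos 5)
Your proof is correct, and it reaches the same key point as the paper — the crossing $p=\sigma_{1}\cap\sigma_{2}$ is a kernel point — but it justifies this differently. The paper's proof uses $\sigma_{1}$ and $\sigma_{2}$ to cut $P$ into four sub-polygons $P_{1},\dots,P_{4}$ and simply asserts that each is an orthogonal fan polygon with the intersection point $\phi$ as its common core vertex, so that $\phi$ sees all of $P$; the fan-polygon claim itself is not argued (and, if one tried, it would require essentially the monotonicity reasoning you carry out). You instead verify visibility directly: for any $q\in P$ and any $w$ on $pq$, the horizontal slice of $P$ at height $q_{y}$ is connected and contains both $x_{2}$ and $q_{x}$, hence $x_{w}$; then the vertical slice at abscissa $x_{w}$ is connected and contains both $y_{1}$ and $q_{y}$, hence $y_{w}$ — so $w\in P$. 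This two-step slice argument is where orthoconvexity (both monotonicity directions) is genuinely used, and your version also supplies a detail the paper glosses over, namely that $\sigma_{1}$ and $\sigma_{2}$ actually intersect inside $P$ (via the extent bookkeeping $x_{2}\in[x_{\min},x_{\max}]$, $y_{1}\in[y_{\min},y_{\max}]$). In short: the paper's route is shorter and more pictorial, tying into its fan-polygon terminology; yours is more elementary and self-contained, and closes the small gaps the paper leaves to the figure.
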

\begin{proof}
If $ \sigma_{1} $ connects the leftmost and the rightmost vertical edges of $ P $ and $ \sigma_{2} $  connects the upper and the lower horizontal edges of $ P $ then they have an intersection $\phi$ that is contained in $ P $. $\sigma_{1} $ and $ \sigma_{2} $ decompose $P$ into 4 sub-polygons $ P_{1}$,$ P_{2}$,$ P_{3}$ and $ P_{4}$. All of obtained sub-polygons are orthogonal fan polygons and $\phi$ is their core vertex, jointly. In every part, the entire sub-polygon is visible from  $ \phi $ and also it is on the kernel. Therefore, $ \phi $ belongs to the kernel of $ P $ and when guard $ g $ occurs in the kernel, every point in $ P $ is guarded by it.
\end{proof}
\begin{figure}
	\centering
	\includegraphics[width=0.7\textwidth]{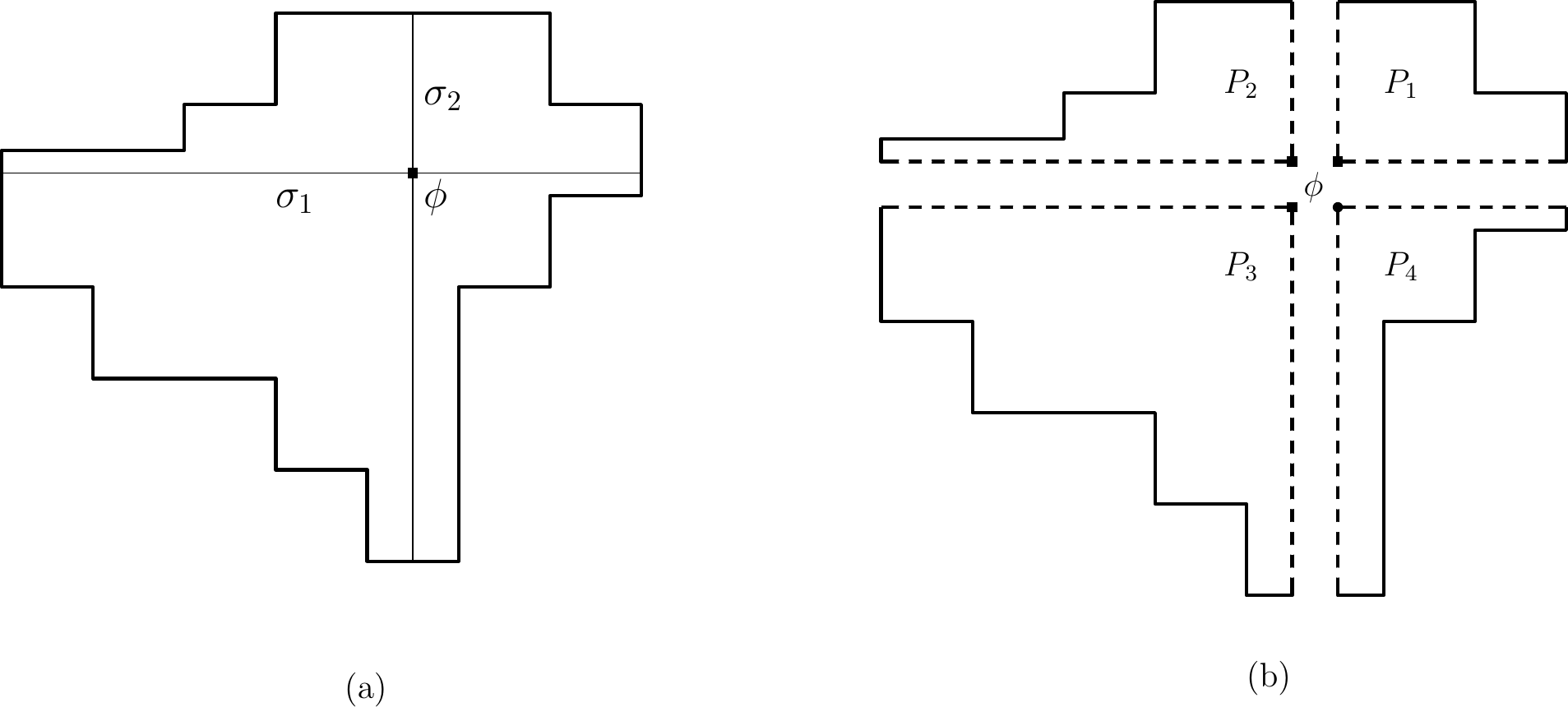}
	\caption{(a)An orthoconvex polygon $ P $ that has $ \sigma_{1} $ and $ \sigma_{2} $. Point $ \phi $ is the intersection of $ \sigma_{1} $ and $ \sigma_{2} $. (b)The decomposition of $ P $ into orthogonal fan polygons $ P_{1}$,$ P_{2}$,$ P_{3}$ and $ P_{4}$.}
	\label{fi:fig2}
\end{figure}
Given a monotone orthogonal polygon $ P $ with $ n $ vertices, if $ P $ be an orthoconvex, then floating point orthogonal watchman route problem has a simple result that is only a point on its kernel like that $ \phi $ in figure~\ref{fi:fig2}. So, it is an optimum result. In the next section, we provide an algorithm for finding the optimum route for every given monotone orthogonal polygon.
\section{The Algorithm of Orthogonal Watchman Route Problem for Guarding Monotone Areas}
Let $ P $ be an $ x $-monotone orthogonal polygon, we want to find an orthogonal route with minimum bends such that every point in $ P $ is visible along the route. In the other words, the route must be in such a way that the interior and boundary of $ P $ is weak visible from it. Therefore, we defined an $ x $-monotone orthogonal polygon whose optimum orthogonal route is a horizontal line segment and it is named \textit{balanced} polygon. An orthogonal polygon $ B $ is named balaced polygon if it is $ x $-monotone and there is a horizontal line segment that is connected the leftmost and right most vertical edges of the polygon with no intersection with other edges. We name this line segment as \textit{align}. The align line segment for the balanced polygon is not unique, in the every balanced polygon there is an orthogonal \textit{corridor} from leftmost edge to rightmost edges contains align segment. Every orthoconvex polygon is balanced, too. If an orthogonal polygon is not balanced it is named \textit{unbalanced}. In unbalanced polygon, there is no line segment (and corridor) that is connected the leftmost and rightmost edges without intersecting the other parts of the boundary. In the following, we will describe an algorithm to decompose an orthogonal $ x $-monotone polygon into balanced sub-polygons. The algorithm is based on finding corridors in the polygon. Let the leftmost vertical edge of $ P $ be $ \varepsilon $, for finding the first balanced part of $ P $, start from the leftmost vertical edge $ \varepsilon $ of $ P $, propagate a light beam in rectilinear path (straight-line path) perpendicular to $ \varepsilon $ and therefore collinear with the $ X $ axis. Remember that in vertical decomposition of $ P $, the set of obtained rectangles is notated by $ R $. All or part of this light beam passes through some rectangles of set $ R $. we name this subset of $ R $ as $ R_{\rho} $. These rectangles together make a sub-polygon $ \rho $ of $ P $ and it is the first balanced part of polygon. If $ R_{i} $ be a rectangle part of sub-polygon $ \rho $ and $ u_{i}$ and $l_{i} $ be the upper and lower horizontal edges of $ R_{i} $, respectively, for every $ u_{i}$ and $l_{i} $ belongs to $ \rho $, it is established that $ \min_{u_{i} \in \rho} (y(u_{i}))\geq \max_{l_{j} \in \rho}(y(l_{j})) $. So, there exists a horizontal line segment $ \sigma $  which is connecting the leftmost and rightmost vertical edges of $ \rho $ such that $ \sigma\in \rho $ because $ \rho $ is balanced. If we want to find an optimum orthogonal route for a balanced $ x $-monotone polygon with minimum length, it is possible to use align for finding the route. See Figure~\ref{fi:fig3} for an illustration. Now, If we remove $ \rho $ from $ P $ and iterate the described actions, $ P $ is decomposed to several balanced $ x $-monotone polygon. Now, we describe a linear-time algorithm for decomposition $ P $ to the balanced sub-polygons in algorithm~\ref{al:algo1}.
\begin{algorithm}[]
	\KwData{an $ x $-monotone polygon with $ n $ vertices}
	\KwResult{the shortest orthogonal route with minmum number of bends}
	(1)Set $ min_u=u_1$ and $ max_l=l_1 $\;
	(2)\ForEach{rectangle $ R_{i} $ belongs to $ R $}{
		(3)\If{ $ u_i>max_l $ or $ l_i<min_u $}{
			remove $ R_1,\dots,R_{i-1} $ from $ R $\;
			refresh the index of  $ R $ starting with $ 1 $\;
			go to 1\;
		}
		
		(4)Compute $min_u=\min(min_u,u_{i})$ and $max_l=\max(max_l,l_{i})$\;
}
\caption{Decomposition $ P $ into the balanced sub-polygons.}
\label{al:algo1}
\end{algorithm}
\begin{figure}
	\centering
	\includegraphics[width=\textwidth]{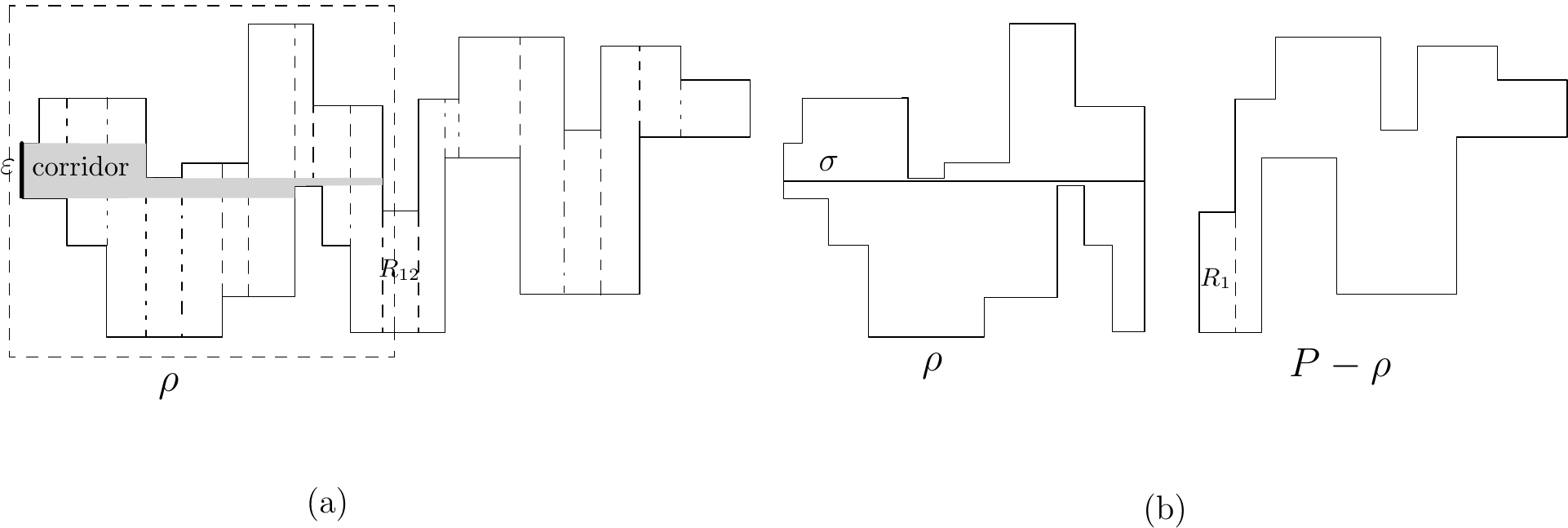}
	\caption{(a)Propagation a light beam to detemine sub-polygon $ \rho $. (b) Removing $ \rho $ from $ P $ and iterating the algorithm to obtain the decomposition.}
	\label{fi:fig3}
\end{figure}
\\If the condition in item 3 is satisfied, a balanced sub-polygon $ \rho $ is determined. So, we remove it from $ P $ and iterate algorithm for $ P-\rho $. We remove the rectangles belong to $ \rho $ from $ R $ as $ R=R-R_{\rho} $. We know the members of $ R $ are ordered from left to right and labeled from $ 1 $, after removing, we label the remained members from $ 1 $, again. Certainly, the same actions will be occurred for $ U $ and $ L $. The number of iterations is equal to the cardinality of $ R $ (in the start). Therefore, the time complexity for decomposition $ P $ into balanced polygons is linear. Every balanced polygon like $ \rho $ has a horizontal line-segment like $ \sigma $ which is connecting the leftmost and rightmost edges of $ \rho $ as named align. The entire $ \rho $ is visible from at least one point of align $ \sigma $, in the other words, $ \rho $ is weak visible from $ \sigma $. Therefore, $ \sigma $ is a candidate for being a route for the orthogonal watchman route problem in $ \rho $. Assume that $ P $ is decomposed into the balanced sub-polygons $ \rho_{1},\rho_{2},\dots,\rho_{k} $ and $ \sigma_{1},\sigma_{2},\dots,\sigma_{k} $ be their align line-segments, respectively. Paste $ \sigma_{1},\sigma_{2},\dots,\sigma_{k} $ together using $ k-1 $ vertical line-segments that connect the right end-point of $ \sigma_{i} $ to the left end-point of $ \sigma_{i+1} $, for every $ i $ between $ 1 $ and $ k-1 $. This obtained orthogonal path $ \varPi $ is the primary result for orthogonal floating watchman route problem in $ P $, if we trim some frills from its two ends. The watchman can ignore two parts of $ \varPi $ which are in the beginning and end of the route. The watchman can begin guarding from the leftmost point of the kernel of $  \rho_{1} $ that is intersected with $ \sigma_{1} $  to the rightmost point of kernel of $ \rho_{k} $ that is intersected with $  \sigma_{k} $. Now, we describe a linear-time algorithm for trimming  $ \varPi $ to the shortest orthogonal route as possible. The pseudo code of our algorithm is given in algorithm~\ref{al:algo2}. 
\begin{algorithm}[]
	\KwData{an orthogonal path $ \varPi $}
	\KwResult{the shortest orthogonal route}
	(1)\ForEach{rectangle $ R_{i} $ belongs to $ R $, from i=1 to m}{
	(2)	\If {$ u_{i}$ is local maximum or $ l_{i}$ is local minimum}{
			remove $ R_1,\dots,R_{i} $ from $ R $\;
			go to item 3\;
		}
	}

	(3)\ForEach{or each rectangle $ R_{i} $ belongs to $ R $, from i=m down to 1}{
		(4)\If{$ u_{i}$ is local maximum or $ l_{i}$ is local minimum}{
			remove $ R_{i},\dots,R_{m} $ from $ R $\;
			 go to 5\;
	}	}
		
	(5)Compute $ \varPi=\varPi\cap R $\;
	
\caption{Trimming path $ \varPi $ to the shortest orthogonal route}
\label{al:algo2}
\end{algorithm}
\\In the most of cases this algorithm is worked in constant-time but in the worst case it is order of $ O(n) $. This obtained route needs some secondary modification to become minimum orthogonal route as possible. This modification is about lengths of the vertical line segments that is connected every consecutive aligns. Let $\Sigma=\{ \sigma_{1},\sigma_{2},\dots,\sigma_{k}\} $ and $ y(\sigma_{i}) $ be $ y $-coordination of $ \sigma_{i} $ for every $ i $ between $ 1 $ and $ k $. Line segment $ \sigma_{i} $ is align segment for the sub-polygon $ \rho_i $, align segment for $ \rho_i $ is not unique and any horizontal line segment in the corridor of $ \rho_i $ can be its align segment. Some align segments in better than others and make the route shorter. The $ y $-coordination of align segments can be between two parameters in the balanced sub-polygon $ \rho_i $ and there are $ M_i=\min_{u_j\in R_{\rho_i}}y(u_j)$ and $ m_i=\max_{l_k\in R_{\rho_i}}y(l_k)$. We know that for every $ i $ between $ 1 $ and $ k $, $ M_{i} < m_{i+1}$ or $ m_i > M_{i+1} $. If $  M_{i-1}<M_i<M_{i+1}$ or $  M_{i-1}>M_i>M_{i+1}$, it does not matter whether align segment $ \sigma_i $ for $ \rho_i $ be such that $ y(\sigma_i)=M_i $ or $ y(\sigma_i)=m_i $, because the total length of the vertical line segments that are connecting two consecutive aligns $ \sigma_{i-1}, \sigma_{i} $ and $ \sigma_{i},\sigma_{i+1} $ is constant. But, if $  M_{i-1}<M_i>M_{i+1}$ or $  M_{i-1}>M_i<M_{i+1}$, this total length is not constant. In the former, we must select that align segment such that $ y(\sigma_i)=m_i $ and in the second case we must select that align segment that $ y(\sigma_i)=M_i $. If we do it, the route will be shorter because the total length of  $ \upsilon_{i-1} $ and $ \upsilon_{i} $ will be minimum where $\upsilon_{i-1} $ is the vertical line segment that connect right end point of $ \sigma_{i-1} $ and the left end point of $ \sigma_i $ and $ \upsilon_{i} $ is the vertical line segment that connect right end point of $ \sigma_{i} $ and the left end point of $ \sigma_{i+1} $. For the first balanced sub-polygon if $ M_1<M_2 $ then we must select align such that $ y(\sigma_1)=M_1 $ and if $ M_1>M_2 $ then we must select align such that $ y(\sigma_1)=m_1 $. For the last sub-polygon $ \rho_k $,if $ M_k<M_{k-1} $ then we must select align such that $ y(\sigma_k)=M_k $ and if $ M_k>M_{k-1} $ then we must select align such that $ y(\sigma_k)=m_k $. See figure~\ref{fi:fig4}.
 \begin{figure}
 	\centering
 	\includegraphics[width=\textwidth]{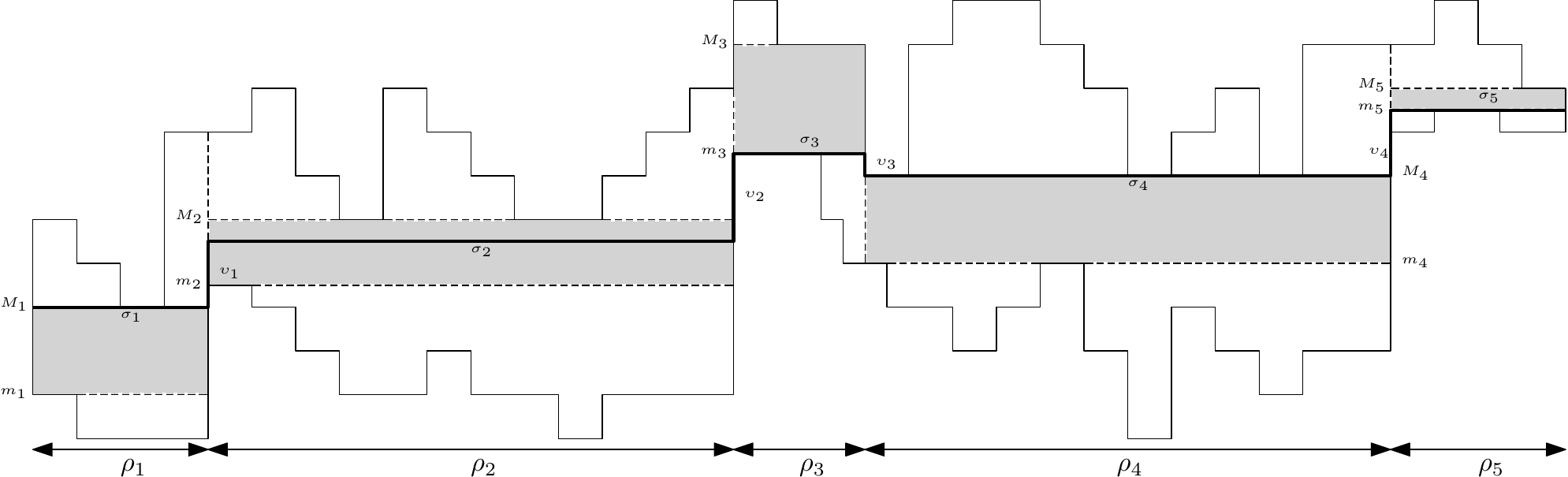}
 	\caption{An illustration of the selecting appropriate align segment for every balanced sub-polygon.}
 	\label{fi:fig4}
 \end{figure}
Now, we describe the linear-time algorithm for selecting appropriate align segments for every balanced sub-polygons to obtain the shortest orthogonal route as possible. The pseudo code of our algorithm is given in algorithm~\ref{al:algo3}. 
\begin{algorithm}[]
	\KwData{a set of available aligns}
	\KwResult{a set of appropriate aligns}
	(1)\ForEach{sub-polygon $ \rho_1 $ if $  M_1<M_2$}{
		set $ y(\sigma_1)=M_1 $, otherwise set $ y(\sigma_1)=m_1 $\;
	}
	
	(2)\ForEach{sub-polygon $ \rho_{i} $ belongs to $ P $, from $ i=1 $ to $ m $ }{
		(3)\eIf {$  M_{i-1}<M_i$ and $M_i>M_{i+1}$ then set $ y(\sigma_i)=m_i $}{
			set $ y(\sigma_i)=m_i $\;}{
			set $ y(\sigma_i)=M_i $\;
			}
		}
		
	(4)\ForEach{sub-polygon $ \rho_m $}{
		(5)\eIf {$  M_m<M_{m-1}$}{
					set $ y(\sigma_m)=M_m $\;}{
					set $ y(\sigma_m)=m_m$\;
					}
				}
				\{Variable $ m $ be the number of last sub-polygon.\}	
\caption{Selecting appropriate aligns}
\label{al:algo3}
\end{algorithm}
\\So far, we described an algorithm for finding the shortest orthogonal path with minimum number of bends. Our algorithm consists of three section, and we separately explained each of these three sections. To complete the algorithm, all three sections must be run sequentially. Next, we will explain the complexity of the algorithm in lemma~\ref{le:le2}.
\begin{lemma}
\label{le:le2}
The time complexity of described algorithm in the worst case is $ O(n) $.
\end{lemma}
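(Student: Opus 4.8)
The plan is to bound the cost of each of the three phases of the algorithm separately and then add the bounds, using throughout the fact that the vertical decomposition produces exactly $m=\frac{n-2}{2}$ rectangles and that these rectangles, together with the edge lists $U$ and $L$, are available in left-to-right order. First I would argue that the vertical decomposition itself, together with the arrays $U$, $L$, $E_U$, $E_L$ and the marking of which horizontal edges are local maxima or local minima, can be produced in $O(n)$ time by a single left-to-right sweep over the $n$ vertices of $P$; after this preprocessing every comparison of the form $u_i>max_l$, $l_i<min_u$, or ``$u_i$ is local maximum'' takes constant time.

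For Algorithm~\ref{al:algo1} the key observation is that the \texttt{ForEach} body is executed at most once per rectangle over the entire run: each rectangle is either the one on which the test in line~3 succeeds (after which $R_1,\dots,R_{i-1}$ are discarded forever and a new phase starts on the remaining suffix), or it is discarded in such a batch, or it is still present but already scanned. Re-indexing after a removal need not cost anything if we keep a running offset into the arrays $R$, $U$, $L$ rather than physically renumbering their entries, so each iteration is $O(1)$ and this phase runs in $O(m)=O(n)$ time. Algorithm~\ref{al:algo2} consists of two one-directional scans that each stop at the first rectangle whose upper (resp.\ lower) edge is a local maximum (resp.\ minimum); both scans are $O(m)$, and forming $\varPi=\varPi\cap R$ touches each of the remaining $O(k)$ aligns once, so this phase is also $O(n)$. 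For Algorithm~\ref{al:algo3} the loop visits each balanced sub-polygon $\rho_i$ once and performs a constant number of comparisons among $M_{i-1}$, $M_i$, $M_{i+1}$ (and symmetrically for $m_i$) before fixing $y(\sigma_i)$; computing $M_i=\min_{u_j\in R_{\rho_i}}y(u_j)$ and $m_i=\max_{l_k\in R_{\rho_i}}y(l_k)$ requires scanning the rectangles of $\rho_i$, but since $\rho_1,\dots,\rho_k$ partition $R$ the total cost of all these scans is $\sum_{i=1}^{k}|R_{\rho_i}|=m$, so this phase is $O(n)$ as well. Adding the three bounds yields an overall worst-case running time of $O(n)+O(n)+O(n)=O(n)$.

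The step I expect to be the main obstacle is the bookkeeping in Algorithm~\ref{al:algo1}: the pseudocode literally says ``refresh the index of $R$ starting with $1$'' and ``go to $1$'', and a naive reading in which the arrays are physically renumbered after every removal would give a quadratic bound. The proof must therefore make explicit that this renumbering is purely notational and is implemented with $O(1)$-time pointer/offset updates, so that the charging argument ``each rectangle is processed a constant number of times'' is valid. A secondary and easier point is to confirm that the local-maximum/minimum tags and the values $M_i,m_i$ needed in Algorithms~\ref{al:algo2} and~\ref{al:algo3} are all obtainable within the same linear sweep that builds the decomposition, so that no phase hides a super-linear subcomputation.
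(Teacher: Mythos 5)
Your proposal is correct and takes essentially the same approach as the paper: the paper's proof likewise argues phase by phase that decomposition, trimming, and align selection are each linear and sums the bounds. In fact your write-up is more careful than the paper's, which merely asserts linearity of each phase (noting only that the number of iterations equals the initial cardinality of $R$); your offset/amortization treatment of the re-indexing in algorithm~\ref{al:algo1} and the partition argument for computing the $M_i,m_i$ supply details the paper leaves implicit.
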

\begin{proof}
In this section, we described our algorithm for finding the shortest orthogonal route with minimum bends for floating point watchman route problem in the linear time corresponding to the number of sides of the polygon. Our algorithm consists of three parts: 1.decomposition, 2.trimming and 3.selecting appropriate aligns. The time of all three parts are linear, so, the complete algorithm time is linear, too. This time is the best time complexity as possible because if we want to compute orthogonal route at least we must to check every vertices of the polygon. Therefore, our algorithm is tight and no algorithm better time is possible. The space complexity of the algorithm is also $ O(n) $.
\end{proof}

\section{Generalizing The Results}
Given an orthogonal polygon $ P $ with $ n $ vertices, after vertical decomposition, if the dual graph of decomposition be a path graph, then the polygon is named \textit{path} polygon~\cite{de2014guarding}. With this definition, every orthogonal monotone is path polygon. In many cases, the properties of path and monotone polygons are the same. we can use the algorithms that are explained in the previous section for the path polygons. In the other words, our algorithm can be solve the problem in linear time on any simple orthogonal polygon $ P $ for which the dual graph is a path. We can use the technique that is described in~\cite{de2014guarding} and convert path polygon $ P $ into a monotone by "unfolding". But in this section we use another strategy for finding the shortest orthogonal route for path polygons. See figure~\ref{fi:fig5}.
 \begin{figure}
 	\centering
 	\includegraphics[width=\textwidth]{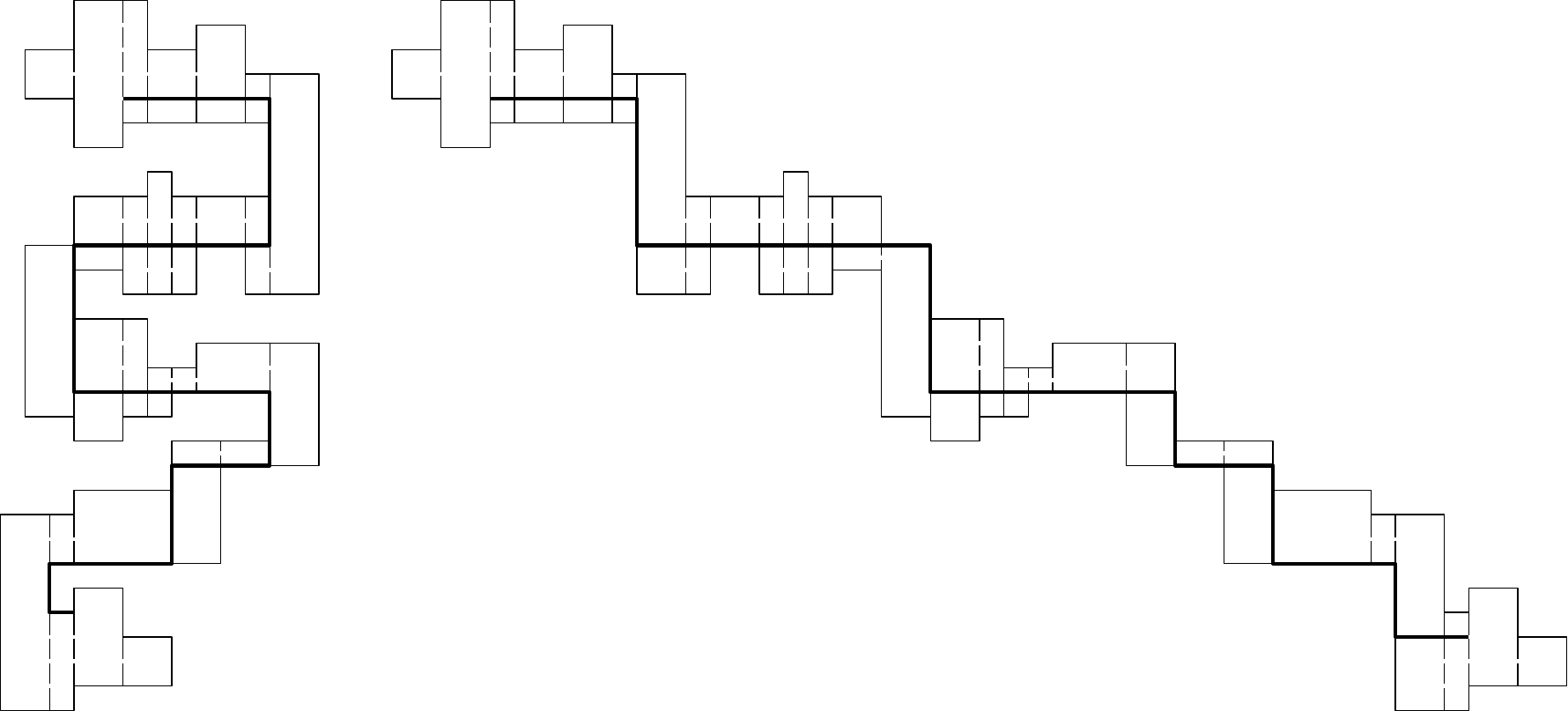}
 	\caption{Unfolding a path polygon and obtaining monotone polygon.}
 	\label{fi:fig5}
 \end{figure}
The difference between the path and monotone polygons is some rectangle parts which are named \textit{reflex rectangles}. A reflex rectangle is a rectangle part of vertical decomposition so that two neighbor parts are located in the same side of it, whether left side or right side. All other rectangle parts have two neighbor in the different sides of them. One in the left side and other one in the right side, except rectangle $ R_1 $ and $ R_m $ which are have only one neighbor part. In the first, we remove all of reflex rectangles from $ P $ by using this removal, the path polygon decompose into several monotone sub-polygons. the do our algorithm (decomposition into balanced) for every obtained monotone sub-polygon and do selecting appropriate aligns. connecting between routs that are obtained from balanced sub-polygons is simple. It is enough to draw a vertical line segment from their end points. After trimming, the obtained route is the shortest route as possible.

\section{Conclusion}
We considered a new version of floating point watchman route problem in which the shortest route must be orthogonal. We solved this problem in the linear time according to $ n $ where $ n $ is the number of sides of the given monotone polygon. the space complexity of our algorithm is $ O(n) $, too. After that we generalized our algorithm with the same time for path polygons. Both time and space complexity of our algorithm is $ O(n) $. It is the best for this new version of the problem. For the future works, we suggest to solve this problem for every simple orthogonal polygon with/without holes. 
\bibliographystyle{plain}
\bibliography{bibfile}
\end{document}